\documentclass[conference]{IEEEtran}
\usepackage[]{algorithm2e}
\usepackage{amsmath,amssymb,cite,amsthm,graphicx,array,epsfig,epstopdf}

\usepackage[mathscr]{euscript}
\theoremstyle{definition}
\newtheorem{definition}{Definition}
\newtheorem{example}{Example}

\theoremstyle{plain}
\newtheorem{theorem}{Theorem}

\newtheorem{lemma}{Lemma}
\newtheorem{corollary}{Corollary}
\newtheorem{proposition}{Proposition}

\theoremstyle{remark}

\usepackage{color}

\usepackage{caption}
\usepackage{csquotes}
\usepackage{balance}
\pagenumbering{arabic}
\usepackage{arydshln}

\usepackage[mathscr]{euscript}
\title{ Unicast-Uniprior Index Coding Problems:\\ Minrank and Criticality}
\begin{document}

\title{ Unicast-Uniprior Index Coding Problems:\\ Minrank and Criticality}
 \author{
   \IEEEauthorblockN{Niranjana Ambadi}
   \IEEEauthorblockA{Department of Electrical Communication Engineering,\\
    Indian Institute of Science, Bangalore, India  560012.\\Email: ambadi@iisc.ac.in.}
 }
\maketitle

\thispagestyle{empty}
\begin{abstract}
	
An index coding problem is called unicast-uniprior when each receiver demands  a unique subset of messages while knowing another unique subset of messages apriori as side-information. In this work, we give an algorithm to compute the minrank of a unicast-uniprior problem. The proposed algorithm greatly simplifies the computation of minrank for unicast-uniprior  problem settings, over the existing method whose complexity is exponential in the number of messages. First, we establish some  properties that are exclusive to the fitting matrix of a unicast-uniprior  problem. Identification of the critical side information bits follows as a result of this exercise. Further, these properties are used  to lay down the algorithm that computes the  minrank. 
\end{abstract}
		\begin{IEEEkeywords} 
		 Index Coding,  Minrank, Unicast, Uniprior.  
		\end{IEEEkeywords}

	\section{Introduction}
	
	\subsection{Background}
	
	The problem of index coding with side information was introduced by Birk et al. \cite{Birk1,ICSI}. It was motivated by applications in which a server broadcasts a set of messages to a set of clients such as audio and video-on-demand, daily newspaper delivery etc. In practical scenarios, the clients fail to receive all the transmissions broadcast by the server. This can be due to various reasons like erraneous forward channel, limited storage at the clients or path loss effects in the case of a wireless broadcast channel. Through a slow backward channel, the clients communicate to the server the current messages in their caches. The server then exploits this information and encodes the pending demands with the objective of minimizing the number of further transmissions. 
	

	Index coding  is a special case of the well known network coding problem\cite{Ashwelde, survey}. Rouayheb et al. \cite{Rsprint} established a reduction that maps any instance of the network coding problem to a corresponding instance of the index coding problem. 
	
	Very often, an index coding problem where the source has $n$ messages and each message is desired by exactly one receiver is specified by a simple directed graph on $n$ nodes, called the $\textit{side-information graph}$, $G(V,E)$. Each node $v_i \in {V}$ corresponds to a demanded message $x_i$ and there exists an edge $e_{ij} \in E$ from $v_i \rightarrow v_j$ if and only if the receiver $D_j$ knows the message $x_i$ as side-information (see Example \ref{exmp1}). Further, a graph functional called \emph{minrank} was shown to characterize the minimum length of index codes \cite{ICSI}. To this effect, the notion of a binary matrix called $\textit{fitting matrix}$ was introduced. A matrix $A=[a_{ij}]_{n \times n}$ fits $G$ if: 
	\begin{equation}
	a_{ij} =
	\begin{cases}
	1, & \text{if $i=j$; },
	\\
	0, & \text{there is no edge from $v_i$ to $v_j$ in ${G}$},
	\\
	x, & \text{ $x\in \{0,1\}$, otherwise} .
	\end{cases}
	\label{fiteqn}
	\end{equation}
	Let $\kappa(i)$ denote the out-degree of the vertex $v_i$ in $G$. Any of the  $z=2^{\sum\limits_{i=1}^{n} \kappa(i)}$ matrices defined according to $(\ref{fiteqn})$ can be regarded as  a fitting matrix (see Example \ref{exmp1}). Minrank is defined as the minimum among the ranks of all such fitting matrices.
	\begin{equation}
	\text{minrank} \triangleq \text{min} \{rk_2(A); A \; \text{fits} \;  G\},\nonumber
	\end{equation}
	where $rk_2(A)$ denotes the rank over Galois Field of two elements, $GF(2)$. 
	

	Finding the minrank requires  computation of ranks of each of the $z$ fitting  matrices over $GF(2)$. Using the traditional method of  Gaussian elimination to compute ranks, the complexity \cite{BunHop, Ibarra} of computing the minrank becomes   $zO(n^2)$. This was shown to be NP-hard \cite{Bar, Peeters}.
	
However, for some very specific problem settings minrank can be computed in polynomial time \cite{critical,KNR}. In \cite{critical},  the notion of critical graphs in index coding was introduced and it was stated that certain side information bits can be called \emph{critical}, if  removing the corresponding edges in the graph $G$ strictly reduces the \emph{capacity region} (see Section \ref{back1}) of the index coding problem. Critical graphs are defined as minimal graphs that support a given set of rates for a given  index coding problem\cite{critical}. This minimality is defined in terms of the containment of the edge set. 

In many broadcast scenarios,  the network topology plays a crucial role in determining the side-information at receivers \cite{TI}. One might often be able to predict the side-information pattern at each receiver based on the network topology. Whenever the receivers have limited storage, it becomes important to store only those side-information bits which are critical so that index coding can be enabled that do not compromise on  optimal  receiver rates. Thus, identifying critical side-information is useful in computing the minrank of an index coding problem. 
	
\subsection{System Model}
	
	Ong and Ho\cite{OngHo} did a categorization of index coding problems (refer Fig. \ref{fig1}) and we use it as the reference for problem formulation. 
	
	There is a server  with $n$ messages, $\mathcal{M}=\{x_1, x_2, \ldots, x_n\}$ that are to be transmitted to  $N$ receivers, $D_1, D_2, \ldots, D_N$. Each receiver $D_i$ is identified  by the ordered pair $(\mathcal{W}_i, \mathcal{A}_i)$, where $\mathcal{W}_i \subseteq \mathcal{M}$ is the set of messages demanded by $D_i$ and $\mathcal{A}_i \subset \mathcal{M}$ is the set of messages known apriori to $D_i$ as \emph{side-information}.  	
	
The index coding problem is said to be \emph{unicast} if $\mathcal{W}_i \cap \mathcal{W}_j= \phi;\; \forall i\neq j,\; i,j \in [N]$ \footnote{For any positive integer $K$, the set $\{1,2, \ldots, K\}$  is denoted by $[ K ]$.
}. The problem is called \emph{uniprior} when $\mathcal{A}_i \cap \mathcal{A}_j= \phi;\; \forall i\neq j,\; i,j \in [N]$. Any problem that is both unicast and uniprior is called a \emph{unicast-uniprior} problem. 
	
	
			 \begin{figure}[htbp]
	 	\centering
	 	\includegraphics[scale=0.55]{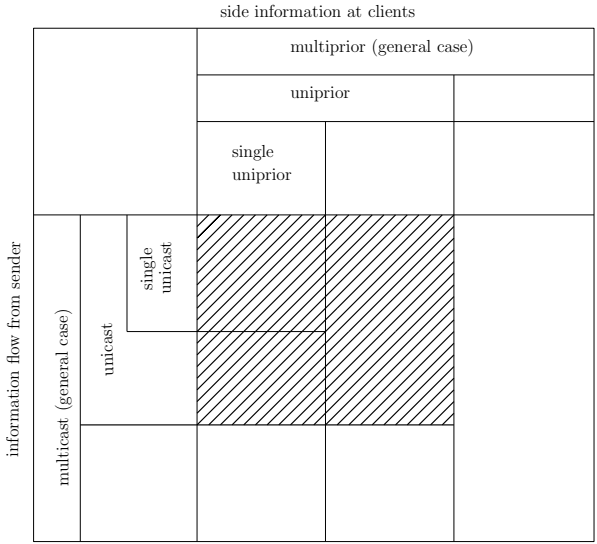}
	 	\caption{Categorization of index coding problems; the shaded part denotes the class of unicast-uniprior  problems that are dealt with in this paper.  }
	 	\label{fig1}
	 \end{figure}

\subsection{Contributions}		
	

	In this work,  the unicast-uniprior  problem is studied.  
	\begin{itemize}
\item The notion of a \emph{critical fitting matrix} for an index coding problem, is introduced. 
\item A new graph structure called the \emph{side-information supergraph} is introduced in order to study the special properties associated with the side-information graph of a unicast-uniprior problem.

\item  The properties of fitting matrix of a unicast-uniprior problem are stated and proved through a series of lemmas.
\item The critical side information for a unicast-uniprior problem is identified using critical graphs and the corresponding fitting matrices.  
\item A new algorithm to compute the minrank is given. This algorithm greatly reduces the number of computations over the existing brute force technique. 
\end{itemize}
 \subsection{Organization}
 The remainder of the paper is organized as follows: The necessary background and terminologies are explained briefly in Section \ref{sec1}. The notions of a critical fitting matrix and a side-information supergraph are introduced in Sections \ref{criticalfit} and  \ref{sidesuper} respectively. Section \ref{sub1} establishes certain properties of a fitting matrix of a unicast-uniprior problem. Section \ref{sub2} contains results concerning the critical graphs and side-information of a unicast-uniprior problem. Finally, the polynomial time algorithm to compute minrank is presented in Section \ref{algo}. Some illustrative examples for minrank computation are given in Section \ref{examples}. Concluding remarks constitute Section \ref{concl}.

\section{Background}
\label{sec1}
This section provides  an overview of some existing results and terminologies. These would be useful in establishing the results in sections to follow. 

\subsection{Linear Index Code, Capacity Region and \\ Symmetric Capacity}
\label{back1}
\begin{definition}
	[Scalar Linear Index Code \cite{ICIP}] When $\mathcal{S}$ is a finite field, an $(\mathcal{S}, l , \mathcal{R})$ index code is scalar linear if, for the source with $n$ messages, $\mathcal{M} =\{x_1, x_2, \ldots, x_n\}$, the transmitted symbol sequence is given by, $$\mathbb{S}^l=\sum\limits_{j=1}^n V_j x_j.$$
	The $l \times 1$ vector $V_j$ is referred to as the precoding vector (or beamforming vector) for the message $x_j$.
		\label{def1}
\end{definition}

%
%
%

In general, the message $x_i$ is a random variable uniformly distributed over the set $\{1,2,\ldots,\left\vert\mathcal{S}\right\vert^{lR_i}\}$, where $R_i$ is the rate for message $x_i$, $\forall i \in [n]$. If the message $x_i$ consists of $P_i$ symbols, i.e., $x_i \in \mathcal{S}^{P_i}$, the code is said to  achieve a rate  $R_i= \frac{P_i}{l}$ for receiver $D_i$. The rate $R_i$ of a receiver $D_i$  is the normalized amount of information transmitted to it.

Identifying the set of all achievable rate tuples, i.e., the capacity region,  still remains an open problem. However, for several special classes of index coding, for example the symmetric cases when all the rates are equal or when $G$ has certain structure\cite{ICSI}, the capacity region has been fully characterized.

\begin{definition}[Symmetric Rate \cite{ICIP}]
Let $\mathcal{R}$ be the rate tuple $(R_1,R_2,\ldots,R_n)$.
When all  messages have the same rate,  say $\mathbf{R}$, then $\mathbf{R}$ is called the  \textit{Symmetric Rate}. 
\end{definition}

An index code can be \textit{scalar} ($P_i=1; \text{for all}\;  i\in [n]$), \textit{vector} ($P_i>1$ for at least one $i\in [n]$), \textit{linear} (when the encoding function is linear) or \textit{non-linear} (when the encoding function is non-linear). 

In this paper, we deal only with \textit{scalar linear index coding}, which means that each message consists of only one symbol, i.e.,  $P_i=1; \forall i \in [n]$. This implies $\mathcal{R}=(\frac{1}{l}, \frac{1}{l}, \ldots, \frac{1}{l})$ and the encoding function is linear, with $\mathcal{S}$ being a finite field. Clearly, the best rate (symmetric capacity) is achieved when $l=minrank$.
\subsection{Critical Graphs and Index Coding}

\begin{definition}
[Critical Graphs\cite{critical}]

Given a fixed set of rates, let $\mathbb{G}$ denote the set  of all  graphs that support the rates. A graph is said to be critical
(or edge critical) if 
\begin{enumerate}
\item it belongs to $\mathbb{G}$, and 
\item deletion of any
edge from the graph makes it to fall outside $\mathbb{G}$.
\end{enumerate}
 
\label{citicalgraph}
	
\end{definition}

\begin{definition}
	[Strongly Connected Graph\cite{diestel}]A directed graph is strongly connected if there is a path between all pairs of vertices \cite{diestel}.
\label{scg}
\end{definition}

\begin{definition}
	 [Union of Two Disjoint Graphs\cite{ITW}] The union of
	$G = (V ,E)$ and $G' = (V',E')$ is defined as $G \cup G' = (V \cup	V', E \cup E')$.
\end{definition}
\begin{definition}
[Union of strongly connected subgraphs\cite{ITW}] Graph $G$ is the union of
	strongly connected subgraphs if there exists a set of disjoint
	graphs ${G_1,G_2,\cdots ,G_k}$ such that 
	\begin{enumerate} 
	\item $G_i$ is strongly connected,	and 
	\item $G = \underset{i}{\bigcup}\; G_i$.
	\end{enumerate}
\end{definition}

\begin{definition}
	[Hamiltonian Cycle\cite{diestel}] A Hamiltonian path (or traceable path) is a path in a directed/undirected graph that visits each vertex exactly once. A Hamiltonian cycle is a Hamiltonian path that is a cycle.
	 
\end{definition}

\begin{definition}
	[Unicycle\cite{ITW}] A graph $G(V,E)$ is referred to as a unicycle if the set of edges $E$ of the graph is a Hamiltonian cycle of $G$.
		\label{unicycledef}
\end{definition}

\begin{definition}
	[Vertex Induced Subgraph\cite{diestel}] A vertex-induced subgraph  is a subset of the vertices of a graph $G$ together with any edges whose endpoints are both in this subset. 
\end{definition}

\section{Results}
\label{fundaresults}

This section discusses the problem of finding the minrank of a unicast-uniprior problem in detail. The discussions presented in this section lead to a polynomial time algorithm (Section \ref{algo}) for computing minrank.

Consider a unicast-uniprior problem with $N$ receivers. We convert this problem to its single unicast equivalent by splitting the receivers{\cite{KS}}.

 Let $\mathbb{A}_{n\times n}$ be a fitting matrix of this single unicast problem with $n$ receivers and $n$ messages, in its general form (with `$x$' denoting side-information bits). Further,  
  \begin{IEEEeqnarray}{c} 
 n=\sum\limits_{i=1}^{N}|\mathcal{W}_i| = \sum\limits_{i=1}^{N}|\mathcal{A}_i|.
 \end{IEEEeqnarray}
 
 This is because:$(i)$ If $|\mathcal{W}_i| > \sum\limits_{i=1}^{N}|\mathcal{A}_i|$ there exist demands which are not present as side-information at any of the receivers. These messages must be sent as independent transmissions and thus can be excluded while constructing an index code, $(ii)$ If $|\mathcal{W}_i| < \sum\limits_{i=1}^{N}|\mathcal{A}_i|$, there exist as side-information, messages that are no longer demanded by any of the receivers and thus can be excluded while constructing an index code.

 Note that the rows of $\mathbb A$ correspond to receiver indices and  columns correspond to  message indices.

\subsection{Critical Fitting Matrix}
\label{criticalfit}

Matrix theory says that if any one element of a $0-1$ matrix is flipped, one of the following happens to its rank i) increases by one, ii) decreases by one, or iii) remains the same. 

Let $A'$ be the matrix obtained by changing any non-diagonal $1$ of the fitting matrix $A$ whose rank equals minrank.   By definition, $A'$ is also a fitting matrix. Because the rank of $A$ is the minimum among all the fitting matrices, that rank of $A'$  cannot be lesser than the rank of $A$. 
Thus,
\begin{IEEEeqnarray}{c}
\text{rank}(A') \in\{ \text{rank}(A), \text{rank}(A)+1\}. \nonumber 
\end{IEEEeqnarray}

A subset of these fitting matrices whose rank equals minrank shall be called as \emph{critical fitting matrices}.
\begin{definition}
[Critical Fitting Matrix] Given a fitting matrix $\mathscr{A}=[a_{ij}]_{n\times n}$ whose rank is equal to minrank. The fitting matrix $\mathscr{A}$ is called a critical fitting matrix if and only if  changing any  $a_{ij}=1; i\neq j$ to $0$ results in a matrix $\mathscr{A}'$ whose rank is strictly greater than the rank of $\mathscr{A}$. 	
\end{definition}

\subsection{Side-information supergraph}
\label{sidesuper}
In order to fully characterize the unicast-uniprior problem a new graph structure called the \emph{side-information supergraph} is introduced.
\begin{definition}[Side-information supergraph]
	\label{sisuper}

	The side-information supergraph $\mathcal{G(V,E)}$ of a unicast-uniprior problem is defined as follows:
	\begin{enumerate}
		\item A vertex $v_i\in \mathcal{V}$ corresponds to the destination demanding the message $x_i$ for all $i \in [ n ]$. 
		\item The vertices derived by splitting a  demand set $\mathcal{W}_i$ of the original unicast-uniprior problem are grouped into one supernode $\mathscr{S}_i$ for all $i\in [ N ]$.
		\item There exists a directed edge from the supernode $\mathscr{S}_j$, $j\in [ N ]$ to  the vertex $v_i$, $i \in [n]$   iff the receiver $D_j\triangleq (\mathcal{W}_i, \mathcal{A}_i)$ knows the message $x_i$ as side-information.  
		 
	\end{enumerate}
\end{definition}

\subsubsection{Properties of the side-information supergraph}
\label{properties}
We note the following characteristics of the side-information supergraph $\mathcal{G}(V,E)$ of a unicast-uniprior problem:
\begin{itemize}
	\item Every vertex $v_i \in V$, $i \in [n]$ has  at least one incoming edge from a supernode. This is because in the unicast - uniprior problem, we have total number of messages, $n = \sum\limits_{i=1}^{N}|\mathcal{W}_i|= \sum\limits_{i=1}^{N}|\mathcal{A}_i|$. So each message $x_i$ demanded by a receiver $D_k; k \in [ N]$ is known apriori to some receiver $D_j ; j \neq k$. Hence, there is one incoming edge to any  $v_i$ from some supernode $S_j$.
	\item Every $v_i \in V$, $i \in [ n ]$ has  only one incoming edge from a supernode. This is because of the uniprior nature of the problem. More than one incoming edge to a vertex $v_k\; (k \in [n])$, implies more than one receiver $D_i; i \in [N]$ knows the message $x_k$ which is not possible as $\mathcal{A}_i \cap \mathcal{A}_j =\phi, \forall i \neq j ; i,j \in [ N]$.
\end{itemize}

\begin {example}
\label{exmp1}
A sample unicast-uniprior problem is described in Table \ref{tableex1}.
	\begin{table}[!htbp]
	
	\begin{center}
	{
			\begin{tabular}{|c|c|c|c|}
				\hline	
				
				$\mathcal{W}_i$&  $x_1, x_2 $ & $x_3, x_4$ & $x_5$\\
				\hline 
				$\mathcal{A}_i$ & $x_4$& $x_5, x_1$ & $ x_2, x_3 $ \\
				\hline
			\end{tabular}
		}
	\end{center}
	\small\caption{Sample unicast-uniprior problem}
	\label{tableex1}
\end{table} 

The single-unicast equivalent of this problem obtained by splitting the $N=3$ receivers into $n=5$ is shown in Table \ref{ex1tab2}.

	\begin{table}[!htbp]
	
	\begin{center}
	{
			\begin{tabular}{|c|c|c|c|c|c|}
				\hline	
				
				$\mathcal{W}_i$&  $x_1$ & $x_2 $ & $x_3$& $x_4$ & $x_5$\\
				\hline 
				$\mathcal{A}_i$ & $x_4$ & $x_4$ & $x_5, x_1$& $x_5, x_1$ & $ x_2, x_3 $ \\
				\hline
			\end{tabular}
		}
	\end{center}
	\small\caption{Single-unicast equivalent of the unicast- uniprior problem in Table \ref{tableex1}}
	\label{ex1tab2}
\end{table} 
The side-information graph $G$  for this problem  is shown in Fig. \ref{figex2}.

\begin{figure}[htbp]
	\centering
	\includegraphics[scale=0.65]{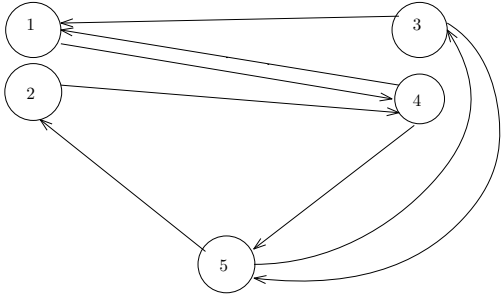}
	\caption{Side-information graph $G$,   problem in Table \ref{ex1tab2} }
	\label{figex2}
\end{figure}

The matrix $A$ in Fig. \ref{exmp1fitfig} is one among the $z=2^{8}=256$ fitting matrices of this problem.    

\begin{figure*}
		{
			\begin{IEEEeqnarray}{ccccccc}
			\mathbb{A}=\left( \begin{array}{ccccc}
			                    1&0&0&x&0\\
                             0&1&0&x&0\\     
                             x&0&1&0&x\\     
                             x&0&0&1&x\\     
                             0&x&x&0&1  
				          \end{array} \right),
         &&\quad & A=\left( \begin{array}{ccccc}
					       1&0&0&1&0\\
                      0&1&0&1&0\\     
                      0&0&1&0&0\\     
                      0&0&0&1&1\\     
                      0&0&1&0&1  
				          \end{array} \right),
			&&\quad &A_c=\left( \begin{array}{ccccc}
			                 1&0&0&1&0\\
					           0&1&0&0&0\\
				              0&0&1&0&1\\
				              1&0&0&1&0\\
				              0&0&1&0&1
				\end{array} \right).
			\nonumber 
			\end{IEEEeqnarray}
		
		\caption{For the problem in Example \ref{exmp1}, $\mathbb{A}$ is the fitting matrix in its general form; $A$ is a fitting matrix; $A_c$ is a critical fitting matrix.}
		\label{exmp1fitfig}
		}
	\end{figure*}

The side-information supergraph $\mathcal{G}$ for this problem is shown in Fig. \ref{figex1}.
	\begin{figure}[htbp]
	\centering
	\includegraphics[scale=0.5]{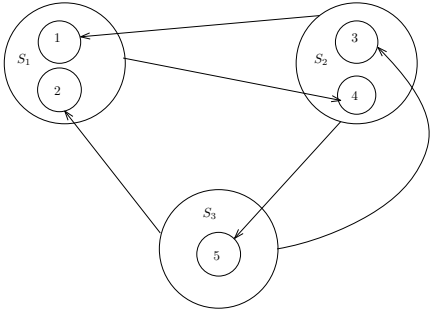}
	\caption{Side-information supergraph $\mathcal{G}$,  problem in Table \ref{tableex1} }
	\label{figex1}
\end{figure}

Notice that the side-information supergraph $\mathcal{G}$ translates to the side-information graph $G$ when each outgoing edge from any supernode $S_i$ is replaced by $|\mathcal{W}_i|$ outgoing edges from the nodes within $S_i$. 

\end{example}

\subsection{On fitting matrices of a unicast-uniprior  problem}
\label{sub1}

Let the subset of rows of $\mathbb A$ that correspond to receivers obtained by splitting the $i^{th}$ receiver of the original unicast-uniprior problem be denoted by $R_{W_i}$. Thus, the $n$ rows of $\mathbb A$ are partitioned into $N$ sets of rows $\{R_{W_1}, R_{W_2}, \ldots, R_{W_N}\}$. 

Henceforth, an arbitrary fitting matrix shall be denoted by $A$, while a fitting matrix in its general form by $\mathbb{A}$.

\begin{proposition}
	\label{prop1}
	Any fitting matrix of a single unicast problem that was derived from a unicast-uniprior problem cannot have more than two identical rows. 
\end{proposition}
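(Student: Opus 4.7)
The plan is to exploit two structural facts about the rows of $\mathbb{A}$: each row has a fixed $1$ on the diagonal and only $x$-entries at the side-information positions of the corresponding (split) receiver, and the side-information sets of \emph{different} original receivers are disjoint by the uniprior hypothesis. First I would fix notation: for a row $j$, say $j \in R_{W_\alpha}$, the row has a $1$ at column $j$, freely chosen $0/1$ values at columns indexed by $\mathcal{A}_\alpha$, and $0$ at every other column.

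Next I would rule out identical rows within a single block. If $j,k \in R_{W_\alpha}$ with $j \neq k$, both demand distinct messages $x_j, x_k \in \mathcal{W}_\alpha$, and since in an index coding problem a receiver never knows its own demand, $j,k \notin \mathcal{A}_\alpha$. Hence column $k$ of row $j$ is forced to $0$ (not in the side-info set), whereas column $k$ of row $k$ is the diagonal $1$. The two rows therefore differ at column $k$, so rows within the same block $R_{W_\alpha}$ can never coincide.

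The heart of the argument is then the cross-block case. Suppose, for contradiction, that three rows indexed by $j,k,l$ are all identical. By the previous step they must lie in three distinct blocks $R_{W_\alpha}, R_{W_\beta}, R_{W_\gamma}$ with $\alpha,\beta,\gamma$ pairwise distinct. Equality at column $j$ forces rows $k$ and $l$ to carry a $1$ there; since a $1$ off the diagonal can only appear at a side-information column, this forces $j \in \mathcal{A}_\beta$ and $j \in \mathcal{A}_\gamma$ simultaneously. But the uniprior assumption gives $\mathcal{A}_\beta \cap \mathcal{A}_\gamma = \phi$, a contradiction. Therefore at most two rows can be identical, which moreover happens only when the two corresponding (original) receivers each carry the other's demand as side-information.

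The only real obstacle is being precise about the two regimes (same block versus different blocks) and showing that the diagonal $1$'s of distinct rows of $\mathbb{A}$ force the uniprior condition to be violated as soon as three rows coincide; once this bookkeeping is laid out, the contradiction with $\mathcal{A}_\beta \cap \mathcal{A}_\gamma = \phi$ closes the proof in one line.
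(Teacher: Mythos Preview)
Your proof is correct and follows essentially the same contradiction strategy as the paper's own argument: first observe that identical rows must come from distinct blocks $R_{W_i}$ (because the diagonal $1$'s sit in columns outside the common side-information set), and then show that three such rows force a single demanded message index to lie in two disjoint side-information sets $\mathcal{A}_\beta$ and $\mathcal{A}_\gamma$, violating the uniprior hypothesis. Your extraction of the contradiction directly at column $j$ is a bit crisper than the paper's phrasing (which argues that the off-diagonal $x$'s would all have to be $0$ and then notes the diagonal $1$'s disagree), but the underlying idea is identical.
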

\begin{proof}
The proof is given in Appendix \ref{prop1proof}.
\end{proof}
\begin{corollary}
	\label{cor1}
	The side-information graph $G$ of a single unicast problem that is derived from a unicast-uniprior problem cannot have cliques with three or more vertices.
\end{corollary}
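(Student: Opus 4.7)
The plan is to argue by contradiction. Suppose the side-information graph $G$ contains a clique on three vertices $v_i,v_j,v_k$. Every edge $v_a\to v_b$ of $G$ is the assertion that the split receiver $D_b$ holds $x_a$ as side information, so the existence of all six clique edges tells us that $x_i$ is known to both $D_j$ and $D_k$, that $x_j$ is known to both $D_i$ and $D_k$, and that $x_k$ is known to both $D_i$ and $D_j$.

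Next I would invoke the uniprior property of the original (pre-splitting) problem: because $\mathcal{A}_1,\dots,\mathcal{A}_N$ are pairwise disjoint and splitting preserves the $\mathcal{A}$-sets, any two split receivers that share even a single side-information bit must have come from one and the same original supernode. Applying this to the pairs $(D_j,D_k)$, $(D_i,D_k)$ and $(D_i,D_j)$ forces $v_i,v_j,v_k$ to lie in a common supernode $\mathscr{S}_l$, so that $x_i,x_j,x_k\in\mathcal{W}_l$. The contradiction then follows immediately: the clique also requires, for instance, that $D_i$ --- whose side information is exactly $\mathcal{A}_l$ --- knows $x_j$, i.e.\ $x_j\in\mathcal{A}_l$. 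Combined with $x_j\in\mathcal{W}_l$ this violates the standing assumption $\mathcal{W}_l\cap\mathcal{A}_l=\emptyset$ that a receiver never has its own demands as side information. Hence no three-clique can exist in $G$, and a fortiori no larger clique either.

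The one step I expect to require the most care is the uniprior deduction placing $v_i,v_j,v_k$ in the same supernode; but this is essentially the already-noted property of the side-information supergraph from Section \ref{properties} that every vertex has exactly one incoming edge from a supernode, so I do not expect a real obstacle. A parallel, matrix-flavoured route to the same conclusion is also available: the six clique positions of the fitting matrix can all be instantiated to $1$ simultaneously, after which rows $i,j,k$ --- which share the same $\mathcal{A}_l$-pattern outside the three clique columns --- can be made mutually identical, directly contradicting Proposition \ref{prop1}.
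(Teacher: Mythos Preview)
Your primary argument is correct but follows a genuinely different route from the paper's. The paper's proof is purely matrix-based: assuming a clique on $p>2$ vertices $v_{\alpha_1},\dots,v_{\alpha_p}$, it instantiates a fitting matrix by setting $x=1$ at every clique position and $x=0$ at every other side-information position, so that each of the $p$ rows has $1$'s exactly in columns $\alpha_1,\dots,\alpha_p$ and $0$'s elsewhere; this produces $p$ identical rows and contradicts Proposition~\ref{prop1}. It never needs to locate the clique vertices inside a common supernode. Your main argument, by contrast, works directly at the graph level: you use the uniprior property to force $v_i,v_j,v_k$ into a single supernode $\mathscr{S}_l$, and then obtain the contradiction $x_j\in\mathcal{W}_l\cap\mathcal{A}_l$. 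This is shorter and more self-contained (it does not rely on Proposition~\ref{prop1}), though it does invoke the standing assumption $\mathcal{W}_l\cap\mathcal{A}_l=\emptyset$, which the paper leaves implicit. The ``matrix-flavoured'' alternative you sketch at the end is essentially the paper's own argument, except that the paper zeroes out \emph{all} non-clique side-information entries and therefore does not need the shared $\mathcal{A}_l$-pattern observation you mention.
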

\begin{proof}
The proof is given in Appendix \ref{proofcor1}.
\end{proof}
 \begin{definition}
 \label{minimaldefn}
A set of linearly dependent rows is said to be \emph{minimally dependent} if every proper subset of it is linearly independent. 
\end{definition}
\begin{proposition}
	\label{prop2}
	Delete every pair of identical rows from $A_{n\times n}$ to obtain  $A'_{m\times n}$. Suppose that the rank of $A'$ is $r'$. Then there are at least  $m-r'$ minimally dependent sets of rows in $A'$ and at least  $n-r$ minimally dependent sets of rows in $A$.	
\end{proposition}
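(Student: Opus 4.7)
The plan is to reduce both bounds to a single general fact: any matrix with $k$ rows and rank $\rho$ has at least $k - \rho$ pairwise distinct minimally dependent sets of rows. Applying this to $A'$ gives the first claim, and applying it to $A$ gives the second, where $r$ denotes the rank of $A$; note that deleting a duplicate row preserves rank, so $r = r'$ and consequently $n - r = (n - m) + (m - r')$, which will match the two contributions I isolate below.

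I would prove the general fact via fundamental circuits with respect to a basis. Pick any maximal linearly independent collection $B$ of the rows of the matrix, with $|B| = \rho$. For each of the $k - \rho$ rows $y \notin B$, write $y$ uniquely as a $GF(2)$-linear combination of elements of $B$, and let $S_y \subseteq B$ be the support of that expansion. Set $C_y := \{y\} \cup S_y$. I would verify that $C_y$ is minimally dependent: it is dependent by construction; removing $y$ leaves a subset of $B$, which is independent; and removing any $b \in S_y$ leaves a set in which any nontrivial dependence relation must assign a nonzero coefficient to $y$ (otherwise the relation would lie entirely within $B$), but then solving for $y$ would yield an expansion of $y$ in $B \setminus \{b\}$, contradicting the uniqueness of its expansion in $B$ since $b \in S_y$. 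Because each $C_y$ contains a distinct non-basis row, the $k - \rho$ fundamental circuits so produced are pairwise distinct.

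A convenient sanity check aligning the two bounds: in $A$, each of the $n - m$ pairs of identical rows (which occur only as pairs, by Proposition \ref{prop1}) is itself a size-two minimally dependent set, and these pairs are automatically disjoint from any minimally dependent set inherited from $A'$, since over $GF(2)$ a size-two circuit would require two equal nonzero rows, which $A'$ does not possess by construction. Together they yield $(n - m) + (m - r') = n - r$ distinct minimally dependent sets in $A$. The main obstacle is the minimality verification for $C_y$; once that essentially matroid-theoretic step is carried out cleanly, the remainder of the argument is elementary bookkeeping supported by Proposition \ref{prop1}.
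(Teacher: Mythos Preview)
Your proposal is correct and follows essentially the same route as the paper: choose a basis of $r'$ rows and, for each of the remaining $m-r'$ rows, form its fundamental circuit (the row together with the support of its expansion in the basis), then handle $A$ by adjoining the size-two circuits coming from identical pairs. Your treatment is in fact more careful than the paper's, which simply asserts that $\underbar r_j \cup \{\underbar r_i : c_{ji}=1\}$ is minimally dependent without verifying minimality, and which dispatches the $n-r$ bound in a single sentence; your explicit check that removing any $b\in S_y$ would contradict uniqueness of the expansion, and your observation that the pair circuits are distinct from the inherited $A'$-circuits, fill gaps the paper leaves implicit.
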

\begin{proof}
The proof is given in Appendix \ref{proofprop2}.
\end{proof}

Note that the receiver demanding the message $x_j$ corresponds to the $j^{th}$ row $\underbar{r}_j$ of $\mathbb{A}$, where $j\in[n]$. 

\begin{proposition}
	\label{prop3}
	The message demanded by the $j^{th}$ receiver, $x_j$ is present as side-information in one and only one $R_{W_i}$ where $i \neq j$. 
\end{proposition}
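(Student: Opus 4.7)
The plan is to exploit the two defining properties of the unicast-uniprior setting (disjoint demand sets and disjoint side-information sets) together with the matched-cardinality identity $n = \sum_{i=1}^{N}|\mathcal{W}_i| = \sum_{i=1}^{N}|\mathcal{A}_i|$ already established in this section. Since the single-unicast equivalent is obtained by splitting receivers without altering the side-information, every row that belongs to $R_{W_i}$ inherits the same side-information set $\mathcal{A}_i$. Hence asking whether $x_j$ appears as side-information in $R_{W_i}$ is exactly the same as asking whether $x_j \in \mathcal{A}_i$.

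First I would establish existence. Because $|\mathcal{A}_1|+\cdots+|\mathcal{A}_N| = n$ equals the total number of distinct messages and the uniprior condition forces the $\mathcal{A}_i$'s to be pairwise disjoint, the union $\bigcup_{i=1}^{N}\mathcal{A}_i$ must be all of $\mathcal{M}$. In particular $x_j$ lies in at least one $\mathcal{A}_i$, so the row block $R_{W_i}$ displays $x_j$ as side-information.

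Next I would prove uniqueness. If $x_j$ appeared as side-information in two different blocks $R_{W_i}$ and $R_{W_{i'}}$ with $i \neq i'$, then $x_j \in \mathcal{A}_i \cap \mathcal{A}_{i'}$, directly contradicting the uniprior assumption $\mathcal{A}_i \cap \mathcal{A}_{i'} = \phi$. So the block containing $x_j$ as side-information is unique.

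Finally I would argue that this block cannot contain the row $\underline{r}_j$ itself, i.e.\ that $i \neq j$ in the sense that the unique block containing $x_j$ as side-information is disjoint from the demand of $\underline{r}_j$. The row $\underline{r}_j$ sits in the block $R_{W_k}$ where $k$ is the unique index with $x_j \in \mathcal{W}_k$; if the same block also carried $x_j$ as side-information, we would have $x_j \in \mathcal{W}_k \cap \mathcal{A}_k$, meaning that the original receiver $D_k$ both demands and already knows $x_j$. This is incompatible with the standing convention for index coding problems (a receiver only demands what it does not already possess), and would also collapse the corresponding entry of $\mathbb{A}$ into the diagonal, removing it from the side-information count in the matched-cardinality identity. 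The only delicate step is this last one, which requires making the implicit $\mathcal{W}_i \cap \mathcal{A}_i = \phi$ assumption explicit; the rest follows directly from set-disjointness bookkeeping.
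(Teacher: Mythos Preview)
Your argument is correct and follows the same line as the paper, whose entire proof is the one sentence ``directly follows from the uniprior nature of the problem.'' In fact you are more careful than the paper: the uniprior hypothesis alone yields only the uniqueness half, and you rightly invoke the matched-cardinality identity $\sum_i |\mathcal{A}_i| = n$ (established earlier in the section) to supply existence, as well as the standing convention $\mathcal{W}_k \cap \mathcal{A}_k = \phi$ to justify the $i \neq j$ clause.
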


\begin{proof}
	The proof directly follows from the uniprior nature of the problem.
\end{proof}

\begin{lemma}
	\label{lemma1}
	
	Given a set  $L$ of $l$ rows of the  matrix $\mathbb A$. Let it  be comprised of $c_i$ rows from  $R_{W_i}$,  $i \in [ N ]$.  Suppose that in some  fitting matrix these $l$ rows are minimally dependent. Then, there always exists a subset $\mathbb{S}$ of these rows that are minimally dependent with at most one row from any $R_{W_i}$, for a different choice of \enquote{$x$}s in these rows. 
\end{lemma}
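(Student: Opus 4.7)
The plan is to work over GF(2) and exploit the uniprior structure directly. Write each row of the fitting matrix as $r_j = e_j + s_j$, where $e_j$ is the standard basis vector at coordinate $j$ and $s_j$ is the side-information vector, supported on $\mathcal{A}_{f(j)}$; here $f(j) \in [N]$ denotes the original receiver from which row $j$ was obtained by splitting. Since $L$ is minimally dependent, $\sum_{j \in L} r_j = 0$ over GF(2). Reading this equation at coordinate $k$ gives
\begin{equation*}
[k \in L] \;+\; \sum_{j \in L_{i^{*}(k)}} (s_j)_k \;=\; 0,
\end{equation*}
where $i^{*}(k) \in [N]$ is the unique original receiver holding $x_k$ as side-information (uniqueness comes from the uniprior property; existence from $n = \sum_i |\mathcal{A}_i|$), and $L_{i^{*}(k)} := L \cap \mathcal{W}_{i^{*}(k)}$. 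The key consequence is that whenever $k \in L$, the set $L_{i^{*}(k)}$ is non-empty, as otherwise the equation would reduce to $1 = 0$.

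With this observation in hand, I would construct $\mathbb{S}$ by picking a single representative $a_i \in L_i$ from each non-empty $L_i$ and setting $\mathbb{S} = \{a_i : L_i \neq \phi\}$. By construction $|\mathbb{S} \cap \mathcal{W}_i| \leq 1$ for every $i \in [N]$, so $\mathbb{S}$ already satisfies the ``at most one row from each $R_{W_i}$'' requirement. Next, I would introduce a fresh choice of \enquote{$x$} entries in precisely the rows $\{r_{a_i}\}_i$ by the rule: $(s_{a_i})_k = 1$ if and only if $k \in \mathcal{A}_i \cap \mathbb{S}$. A coordinate-wise check then confirms $\sum_{a_i \in \mathbb{S}} r_{a_i} = 0$ in this new fitting matrix: for $k \in \mathbb{S}$ the diagonal $1$ at coordinate $k$ is cancelled by the $1$ contributed at position $k$ of $s_{a_{i^{*}(k)}}$ (which exists in $\mathbb{S}$ because $k \in L$ forces $L_{i^{*}(k)} \neq \phi$), while for $k \notin \mathbb{S}$ both contributions vanish by construction.

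Finally, $\mathbb{S}$ is linearly dependent in this modified fitting matrix, so it contains some minimally dependent subset $\mathbb{S}' \subseteq \mathbb{S}$, which inherits the one-row-per-$R_{W_i}$ property and is the desired set. The principal obstacle is the simultaneous consistency of the \enquote{$x$} assignments across different rows of $\mathbb{S}$ together with the existence of $a_{i^{*}(k)}$ for every $k \in \mathbb{S}$; consistency is granted because the uniprior condition makes the sets $\mathcal{A}_i$ pairwise disjoint, and the existence reduces to the single structural observation above. Beyond that, the argument is straightforward book-keeping across the coordinates $k \in [n]$, and neither induction on $l$ nor on $\max_i c_i$ is required.
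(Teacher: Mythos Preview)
Your argument is correct. The key structural observation---that $k \in L$ forces $L_{i^{*}(k)} \neq \phi$---is exactly the uniprior content needed, and your coordinate-wise verification that $\sum_{a_i \in \mathbb{S}} r_{a_i} = 0$ under the new $x$-assignment is clean. Passing from the dependent set $\mathbb{S}$ to a minimally dependent subset $\mathbb{S}'$ at the end is necessary and correctly handled, since your representatives may well decompose into several disjoint ``cycles''.

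The paper takes a different, graph-theoretic route. It builds the side-information supergraph $\mathcal{G}'$ on the $l$ vertices, starts at an arbitrary node, and repeatedly follows the (unique, by unipriority) incoming supernode edge, picking one vertex inside each supernode visited. A pigeonhole argument on the $\Omega$ supernodes forces the walk to revisit some supernode, yielding a directed cycle whose vertices (one per supernode) are then declared to be $\mathbb{S}$; the $x$'s along that cycle are set to $1$. Compared with your approach, the paper's walk construction directly produces a minimally dependent set (a single cycle) without the final extraction step, and it dovetails immediately with Lemma~\ref{lemma2}, which asserts that $\mathbb{S}$ induces a unicycle in $G$. Your algebraic route is shorter and avoids introducing the supergraph, but the link to the unicycle structure used downstream is one step further removed; you would recover it by noting that any minimally dependent $\mathbb{S}' \subseteq \mathbb{S}$ under your $x$-assignment has each row carrying exactly two $1$'s, hence corresponds to a single directed cycle on its index set.
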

\begin{proof}
The proof is given in Appendix \ref{prooflemma1}.
\end{proof}

 \begin{lemma}
 	\label{lemma2}
 	The vertex induced subgraph $Q$ of  $G$, whose vertices correspond to the rows of the set  $\mathbb{S}$ in Lemma \ref{lemma1} forms a unicycle.
 \end{lemma}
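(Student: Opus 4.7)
The plan is to turn the zero-sum relation satisfied by the rows of $\mathbb{S}$ into a cyclic permutation of its index set, and then read off the desired Hamiltonian cycle from that permutation.

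Let $T\subseteq[n]$ denote the row indices of $\mathbb{S}$, so that $|T|=l$ equals the number of vertices of $Q$. For every $k\in T$ let $i(k)$ denote the supernode with $k\in W_{i(k)}$ and let $i'(k)$ denote the unique supernode with $x_k\in A_{i'(k)}$, which exists by Proposition~\ref{prop3}. The hypothesis of Lemma~\ref{lemma1} reads $|T\cap W_m|\leq 1$ for every supernode $m$. I inspect the zero-sum condition column by column. In each column $k\in T$ the diagonal $1$ at row~$r_k$ must be paired with a single off-diagonal $1$ contributed by some other row $r_j\in\mathbb{S}$. For $r_j$ to carry a settable $1$ in column~$k$, its producing receiver must know $x_k$, i.e.\ $j\in W_{i'(k)}$, and the supernode constraint pins $j$ down uniquely. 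For a column $k\notin T$ the same uniqueness argument shows that at most one row of $\mathbb{S}$ has a free entry there, and that entry must be switched off. This defines a map $\sigma:T\to T$, and by construction each pair $(\sigma(k),k)$ corresponds to a directed edge $v_{\sigma(k)}\to v_k$ of $G$.

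The next step is to upgrade $\sigma$ to a single $l$-cycle using the minimality of $\mathbb{S}$. If some $j\in T$ had empty preimage under $\sigma$, the row $r_j$ would carry only its diagonal $1$ in the chosen configuration, and $\mathbb{S}\setminus\{r_j\}$ could still be made linearly dependent by switching off the lone potential contributor to column~$j$, contradicting minimality. Hence $\sigma$ is surjective and, $T$ being finite, a permutation. If $\sigma$ decomposed into two or more disjoint cycles, then the rows indexed by a single cycle would already form a proper dependent subset, because within that cycle the columns cancel in pairs while every off-diagonal $1$ of those rows sits in a cycle-indexed column (by the uniqueness of the canceller). Minimality forbids this, so $\sigma$ is a single $l$-cycle, and the arrows $v_{\sigma(k)}\to v_k$ trace a Hamiltonian cycle through the vertices of $Q$.

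It remains to show that $Q$ is a unicycle rather than merely a graph containing a Hamiltonian cycle, i.e.\ that no additional edges survive in the induced subgraph. Any edge $v_a\to v_b$ of $Q$ witnesses $x_b\in A_{i(a)}$; uniprior then forces $i(a)=i'(b)$, i.e.\ $a\in W_{i'(b)}$. Since $\sigma(b)$ is the unique element of $T\cap W_{i'(b)}$, we deduce $a=\sigma(b)$, so the putative edge already lies on the Hamiltonian cycle. The main obstacle in this plan is the minimality step: one must verify that when $\sigma$ decomposes into multiple cycles, the rows indexed by a single cycle still satisfy the zero-sum relation in isolation. That check depends crucially on the at-most-one-per-supernode assumption carried over from Lemma~\ref{lemma1}, which prevents the off-diagonal $1$'s of any row from spilling into columns indexed by a different cycle.
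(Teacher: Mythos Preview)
Your argument has a genuine gap in the surjectivity step, and in fact the statement you are trying to prove---that \emph{any} minimally dependent set with at most one row per $R_{W_i}$ induces a unicycle---is false. Take $N=n=3$ with $\mathcal{W}_i=\{x_i\}$, $\mathcal{A}_1=\emptyset$, $\mathcal{A}_2=\{x_1,x_3\}$, $\mathcal{A}_3=\{x_2\}$. With $a_{21}=a_{23}=a_{32}=1$ the three rows $r_1=e_1$, $r_2=e_1+e_2+e_3$, $r_3=e_2+e_3$ are minimally dependent and trivially one-per-supernode, yet your map is $\sigma(1)=2$, $\sigma(2)=3$, $\sigma(3)=2$, which is not a bijection. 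The induced subgraph $Q$ has edges $v_1\!\to\! v_2$, $v_2\!\to\! v_3$, $v_3\!\to\! v_2$ and is not a unicycle ($v_1$ lies on no cycle). Your surjectivity argument fails precisely here: you remove $r_1$, switch off $a_{21}$, and obtain a dependent pair $\{r_2',r_3\}$---but this contradicts nothing, because minimal dependence of $\mathbb{S}$ is asserted only for the \emph{fixed} choice of $x$'s, not across all fitting matrices. Producing a smaller dependent set with a \emph{different} choice of $x$'s does not violate the hypothesis.

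The paper sidesteps this entirely by reading Lemma~\ref{lemma2} as a statement about the \emph{specific} $\mathbb{S}$ produced in the proof of Lemma~\ref{lemma1}, where the rows are picked by walking the side-information supergraph until a supernode repeats. That construction already hands you a Hamiltonian cycle through the vertices of $\mathbb{S}$; the only remaining work is to rule out extra edges in $Q$, which follows (as you correctly argue in your final paragraph) from the one-per-supernode condition together with Proposition~\ref{prop3}. In the counterexample above, the Lemma~\ref{lemma1} construction would output $\{r_2,r_3\}$, never $\{r_1,r_2,r_3\}$. So your uniqueness-of-edges argument is sound, but the cycle itself must be inherited from the construction rather than extracted from minimal dependence alone. (As a minor aside, with the paper's conventions $a_{\sigma(k),k}=1$ gives the edge $v_k\to v_{\sigma(k)}$, not $v_{\sigma(k)}\to v_k$.)
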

\begin{proof}
The proof is given in Appendix \ref{prooflemma2}.
\end{proof}

\subsection{Identifying the critical side-information}
\label{sub2}
 
 For an  index coding problem, the rank of  a fitting matrix that has $x=1$ for all the critical side-information bits  and $x=0$ for all the non-critical side-information bits, equals its minrank. When one is interested in computing the minrank, knowledge of critical graphs naturally helps in narrowing down the number ($z$) of fitting matrices whose rank must be computed.
 
The following  theorem  in \cite{ITW} helps in  identifying the critcial edges in a side-information graph.  
\begin{theorem}
	\label{criticaledge}
	An edge $e$ in the side-information graph $G$ is critical if it belongs to a vertex induced subgraph of $G$ which is a unicycle \cite{ITW}.
\end{theorem}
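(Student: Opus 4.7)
The plan is to show that deleting the edge $e$ from $G$ strictly shrinks the capacity region, which by Definition \ref{citicalgraph} makes $e$ critical. I would do this by exhibiting a specific rate tuple that is achievable in $G$ but infeasible in $G-e$. Let $C$ be the hypothesized vertex-induced subgraph of $G$ that is a unicycle, with vertex set $V_C=\{v_{i_1},v_{i_2},\ldots,v_{i_k}\}$ arranged so that $v_{i_1}\to v_{i_2}\to \cdots \to v_{i_k}\to v_{i_1}$ is a Hamiltonian cycle on $V_C$, and assume without loss of generality that $e=(v_{i_k},v_{i_1})$.

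First I would analyze the sub-index-coding-problem restricted to the receivers and messages indexed by $V_C$. Using the standard cyclic XOR scheme (transmit the $k-1$ linear combinations $x_{i_1}+x_{i_2},\ x_{i_2}+x_{i_3},\ \ldots,\ x_{i_{k-1}}+x_{i_k}$, which together with the wrap-around combination $x_{i_k}+x_{i_1}$ sum to $0$, hence only $k-1$ are independent), each receiver $v_{i_j}\in V_C$ recovers its demanded message using its unique in-neighbour along $C$. This realizes the symmetric rate $1/(k-1)$ on $V_C$, corresponding to the minrank of the induced cycle being exactly $k-1$.

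Next I would show that, after deletion of $e$, the induced sub-problem on $V_C$ becomes a Hamiltonian path, and its minrank jumps to $k$. The key observation is that receiver $D_{i_1}$ now has \emph{no} side-information among $\{x_{i_1},\ldots,x_{i_k}\}$ at all, so to let $D_{i_1}$ decode $x_{i_1}$ any valid scalar-linear encoding must include a transmission whose restriction to these coordinates is $x_{i_1}$ (uncombined with the other $x_{i_j}$'s). Combined with the fact that $D_{i_2},\ldots,D_{i_k}$ still need one further independent combination each, a short span/linear-algebra argument forces at least $k$ independent transmissions, so the symmetric rate on the sub-problem collapses to $1/k$. Lifting this back to the full graph uses the \emph{vertex-induced} hypothesis crucially: no edges of $G$ between vertices of $V_C$ lie outside $C$, so any code for $G-e$, when restricted to the messages and decoders in $V_C$, is a valid code for the induced sub-problem on $V_C$ with edge $e$ removed. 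Consequently the rate tuple in which every receiver in $V_C$ has rate $1/(k-1)$ (and other receivers have rate $0$, or any rate achievable in the rest of $G$) is feasible in $G$ but infeasible in $G-e$, establishing criticality.

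The main obstacle I anticipate is the lower bound in the previous paragraph, namely a clean proof that deletion of any single edge from a directed Hamiltonian cycle strictly increases the minrank (equivalently, strictly decreases the symmetric capacity). Upper-bound/achievability arguments for the cycle are easy, but ruling out \emph{every} possible $(k-1)$-dimensional linear encoding on a Hamiltonian path requires a careful dimension count that isolates the receiver with no side-information and shows it forces a dedicated transmission, independent of the other $k-1$ combinations needed by the remaining receivers. Once that step is pinned down, the vertex-induced restriction argument and the resulting strict inclusion of capacity regions follow straightforwardly.
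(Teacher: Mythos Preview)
The paper does not supply its own proof of this statement; Theorem~\ref{criticaledge} is quoted verbatim from \cite{ITW} and used as a black box, so there is nothing in the paper to compare your argument against. That said, your proposal is essentially the standard proof from \cite{ITW}: restrict attention to the unicycle $C$, achieve symmetric rate $1/(k-1)$ on $V_C$ in $G$ via the cyclic XOR code, and show this rate is infeasible in $G-e$ because the induced subproblem on $V_C$ becomes a directed path.

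One refinement is worth making. For the converse step you argue via minrank (``minrank jumps to $k$''), but minrank controls only the \emph{scalar-linear} optimum, whereas criticality in Definition~\ref{citicalgraph} is stated for the capacity region. The clean converse is the MAIS (maximum acyclic induced subgraph) outer bound: after deleting $e$, the induced subgraph on $V_C$ is acyclic on $k$ vertices, so any index code for it---linear or not---requires broadcast rate at least $k$, forcing symmetric rate $\le 1/k$. This replaces the linear-algebra dimension count you flagged as the main obstacle and works uniformly for all codes. Your ``vertex-induced'' restriction step (fixing out-of-$V_C$ messages to constants so that side-information outside $V_C$ becomes useless) is exactly right and is what makes the subproblem converse lift to the full graph $G-e$.
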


For the unicast-uniprior problem, identifying critical side-information is as good as identifying the unicycles in the side-information graph $G$. This is discussed in the following section.
\subsection{Translating critical graphs into fitting matrices for a unicast-uniprior problem}
\label{sub3}


Next, we combine the results in Sections \ref{sub1} and \ref{sub2} into the context of a unicast-uniprior problem and subsequently develop the algorithm for finding minrank in Section \ref{algo}. 

Using Proposition \ref{prop1}, any fitting matrix of a unicast-uniprior problem cannot have more than two identical rows. Using Proposition \ref{prop3}, each of these identical rows belong to $R_{W_i}$ and $R_{W_j}$ with $i \neq j$;  $i,j \in[N]$.

Using Proposition \ref{prop2}, it is clear that a fitting matrix has at least $m-r$ minimally dependent rows. For any critical fitting matrix  $r=minrank$ and hence it is guaranteed to have at least $m-r$ minimally dependent sets of rows (which is the guaranteed maximum according to Proposition \ref{prop2}). 

The following theorem \cite{critical} regarding critical graphs becomes important in identifying the critical fitting matrices for a unicast-uniprior problem.

\begin{theorem}
	\label{USCS}
	Every critical graph for linear index coding is a union of strongly connected subgraphs. In particular, removing the edges not lying on a
	directed cycle does not change the capacity region in
	these cases \cite{critical}.
\end{theorem}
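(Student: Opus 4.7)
The plan is to reformulate the theorem in a combinatorially cleaner way and then reduce the capacity statement to a local modification of any given code. First, I would show that a directed graph $G$ is a union of strongly connected subgraphs if and only if every edge of $G$ lies on some directed cycle. The forward implication is immediate. For the converse, if every edge $(u,v)$ lies on a cycle then $v$ can reach $u$, so $u$ and $v$ belong to the same strongly connected component; hence no edge crosses between distinct SCCs and $G$ decomposes as the disjoint union of its SCCs. With this equivalence in hand, proving the theorem reduces to showing that every edge of a critical graph lies on a directed cycle, and by contrapositive of the definition of criticality, this follows once I establish: if $e=(v_i,v_j)$ is an edge of $G$ that lies on no directed cycle, then the capacity region of $G$ coincides with that of $G' = G \setminus e$.

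One inclusion is free, since deleting side-information cannot enlarge the capacity region, so I would concentrate on the other direction and argue that any rate tuple achievable on $G$ remains achievable on $G'$. The key structural observation is that, because $e$ is not on a cycle, the condensation DAG of $G$ contains a directed edge from the SCC of $v_i$ to the SCC of $v_j$ and no directed path from $v_j$ back to $v_i$. Given a linear index code for $G$ with precoding vectors $\{V_k\}_{k=1}^{n}$, I would modify the code by replacing $V_i$ with $V_i' = V_i + \alpha V_j$ (more generally, by adding a correction in the span of vectors indexed by messages known to $D_j$), choosing $\alpha$ so that $D_j$ can recover $x_j$ without using $x_i$. Because no receiver ``downstream'' of $v_j$ in the condensation DAG sees $x_i$ through a cycle involving $e$, the perturbation can be made consistent across all receivers, yielding a valid code on $G'$ of the same length and rate tuple.

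The hard part will be the last step: making the modification so that decodability is preserved at \emph{every} receiver simultaneously, and extending beyond linear codes. For linear codes one can exploit the block-triangular structure of the precoding matrix induced by any topological ordering of the condensation DAG, and push the dependence on $x_i$ out of the row associated with $D_j$ by successive Gaussian-type updates; verifying that this does not interfere with other receivers is essentially a check that the updates live in subspaces already known as side-information to those receivers. For general (possibly non-linear) codes the argument has to be information-theoretic, leveraging the acyclicity between the two SCCs to show that $x_i$ cannot convey any usable entropy to $D_j$'s decoder beyond what is already available. I would expect this information-theoretic step to be the main obstacle, and it is here that the full machinery of \cite{critical} is likely indispensable.
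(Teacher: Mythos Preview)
The paper does not prove Theorem~\ref{USCS}; it is quoted as a result from \cite{critical} and used as a black box in Sections~\ref{sub3} and~\ref{correctness}. There is therefore no in-paper proof against which to compare your proposal.

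As for the proposal itself: the reduction in your first paragraph (a graph is a union of strongly connected subgraphs iff every edge lies on a directed cycle, hence it suffices to show that edges not on any cycle are removable without shrinking the capacity region) is correct and is indeed the way \cite{critical} frames the problem. Your second paragraph, however, does not yet constitute a proof. The local fix $V_i' = V_i + \alpha V_j$ alters the encoding of $x_i$ globally, and every receiver that either wants $x_i$ or has $x_i$ as side-information must still decode after the change; you note this difficulty but do not resolve it. The ``block-triangular Gaussian update'' you sketch is closer in spirit to the argument in \cite{critical}, but you would need to formalize which subspace the correction lives in and why it is contained in every affected receiver's side-information span---you yourself flag this as the main obstacle and defer to \cite{critical}. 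In short, your outline identifies the right contrapositive and the right structural fact about the condensation DAG, but the actual code-modification step is left as a hope rather than an argument, which is precisely the content of the cited reference.
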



\section{Algorithm}
\label{algo}
This section discusses an efficient algorithm to compute the minrank of a unicast-uniprior problem in polynomial time. 


\subsection{Algorithm to compute minrank}
\begin{enumerate}
	\item Let $W_{max}\triangleq \underset{i=[N]}{max} \mid\mathcal{W}_i\mid$.
	\item We know that  the $(i,i)^{th}$ element  of a fitting matrix corresponds to the $i^{th}$ wanted message $x_i$. We form an $N \times W_{max} $ table $T$ such that the $k^{th}$ row of the table consists of messages in $\mathcal{W}_k; k \in [N]$. Clearly, there are 
	\begin{IEEEeqnarray}{c}
	\beta=\prod_{i=1}^{N} {{W_{max}} \choose {|\mathcal{W}_i|}} |\mathcal{W}_i|!
	\end{IEEEeqnarray}  ways of filling this table. Let each instance of $T$ be denoted by $T_i; i\in [\beta]$.
	\item Consider each column of the table as a single unicast uniprior problem. The wanted message is the table entry and the side-information is same as the side-information derived from the unicast-uniprior problem. The minrank of a single unicast uniprior index coding problem can be computed in linear time as discussed in \cite{KNR}.
	
	\item Let the minranks along each column in $T_i$ be denoted by $rk_{i1}, rk_{i2}, \ldots, rk_{iW_{max}}$. 
	We define the sum of minranks of the $W_{max}$ single unicast uniprior problems as follows: For $i \in [\beta]$,
	
	\begin{IEEEeqnarray}{c}
	S_i \triangleq \sum\limits_{j=1}^{W_{max}} rk_{ij}
	\end{IEEEeqnarray}
	
	\item The minrank of the unicast-uniprior problem is given by: 
	\begin{IEEEeqnarray}{c}
	minrank = {min}{\;S_i}.
	\label{Sieqn}
	\end{IEEEeqnarray}
\end{enumerate}

\subsection{Proof of correctness}
\label{correctness}

\begin{lemma}
\label{ch4lemma3}
 Let $Q$ be a unicycle in the side-information graph $G$. Let the vertices of $Q$ be part of a bigger directed cycle $C$. Let $S_Q$ and $S_C$ denote the sets of rows in a fitting matrix $A$ that correspond to the vertices of $Q$ and $C$, respectively. If the rows in $S_Q$ are minimally dependent in $A$, then the rows in $S_C$ are not minimally dependent and vice versa. 
\end{lemma}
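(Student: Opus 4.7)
The plan is to derive this lemma almost directly from Definition~\ref{minimaldefn}, using only the fact that the vertex set of $Q$ is strictly contained in that of $C$. Since $C$ is a \emph{bigger} directed cycle whose vertices include all vertices of $Q$, the corresponding row sets in the fitting matrix $A$ satisfy $S_Q \subsetneq S_C$. This proper containment will be the single structural ingredient; everything else is bookkeeping.

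I would then handle both directions in parallel. Suppose first that the rows in $S_Q$ are minimally dependent in $A$. In particular, $S_Q$ is a linearly dependent set of rows. Since $S_Q$ is a proper subset of $S_C$, this exhibits a linearly dependent proper subset of $S_C$, and Definition~\ref{minimaldefn} then rules out $S_C$ being minimally dependent, because that definition requires every proper subset to be linearly independent. Conversely, if the rows in $S_C$ are minimally dependent, then every proper subset of $S_C$ is linearly independent; $S_Q$ is such a proper subset, so the rows in $S_Q$ are linearly independent, and hence certainly not minimally dependent.

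There is no real obstacle in the proof: once the strict containment $S_Q \subsetneq S_C$ is recorded, the equivalence is immediate from unpacking the definition of minimal dependence. The genuine content of the lemma lies in how it will later feed into the algorithm of Section~\ref{algo}: it guarantees that when a minimal linear dependence among rows arises from a unicycle (as identified in Lemma~\ref{lemma2}), any larger directed cycle built on those same vertices cannot simultaneously yield an independent minimal dependence, so that the rank-deficiency contributions from unicycles are not double-counted by the cycles that contain them.
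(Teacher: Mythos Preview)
Your proposal is correct and matches the paper's own argument: the paper simply states that the proof follows directly from Definition~\ref{minimaldefn}, and you have spelled out exactly that argument using the proper containment $S_Q \subsetneq S_C$. There is nothing to add.
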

\begin{proof}
The proof directly follows from the definition of a minimally dependent set (Definition \ref{minimaldefn}).
\end{proof}

\begin{lemma}
Let $G_C$ be  subgraph of $G$ induced by the vertices of  a directed cycle $C$. The subgraph $G_C$ has one or more disjoint vertex induced subgraphs $G_{Q_1}, G_{Q_2}, \ldots, G_{Q_m}$ that are unicycles.
\end{lemma}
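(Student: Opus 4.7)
The plan is to exhibit at least one unicycle inside $G_C$ via a minimum-cycle argument, and then iterate on the remaining vertices to extract further disjoint unicycles. Since $G_C$ contains the directed cycle $C$ itself, it certainly contains a directed cycle, so I would begin by picking $Q_1$ to be a directed cycle contained in $G_C$ of smallest length.

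The central claim will be that $G_{Q_1}$, the subgraph of $G$ induced on $V(Q_1)$, consists precisely of the edges of $Q_1$ and is therefore a unicycle in the sense of Definition \ref{unicycledef}. I would argue this by contradiction: suppose there is an edge $u \to v$ in $G_{Q_1}$ that is not one of the edges of $Q_1$. Writing $Q_1$ as $u_1 \to u_2 \to \cdots \to u_k \to u_1$, such a chord has the form $u_i \to u_j$ with $j \not\equiv i+1 \pmod{k}$. Traversing this chord from $u_i$ to $u_j$ and then following the edges of $Q_1$ from $u_j$ back to $u_i$ produces a new directed cycle lying entirely in $G_C$, whose length is $1 + ((i-j)\bmod k)$, and this quantity is strictly less than $k$ precisely because $j \not\equiv i+1 \pmod{k}$. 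This contradicts the minimality of $Q_1$, so no chord can exist and $G_{Q_1}$ is a unicycle.

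To obtain a maximal family, I would then delete $V(Q_1)$ and examine the vertex-induced subgraph on $V(C)\setminus V(Q_1)$; if a directed cycle still exists in this remainder, I would apply the same minimum-cycle selection to obtain $Q_2$, and continue until no directed cycle survives. By construction the resulting collection $G_{Q_1}, G_{Q_2}, \ldots, G_{Q_m}$ is pairwise vertex-disjoint and each $G_{Q_i}$ is a unicycle, giving $m \geq 1$ as required.

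The only real technical point to get right is the chord-shortcut claim, where one must check the length computation across the different possible relative positions of $u_i$ and $u_j$ on $Q_1$ (forward, backward, and wrap-around around the cyclic order). In every case the chord skips at least one vertex of $Q_1$, so the shortcut is strictly shorter than $Q_1$ and the contradiction with minimality goes through; once that is established, the rest of the argument is routine.
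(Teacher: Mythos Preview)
Your argument is correct and in fact supplies a genuine proof, whereas the paper's ``proof'' of this lemma is not a proof at all: it merely exhibits, for the specific problem of Example~\ref{exmp1}, the directed cycle $C=(v_1,v_4,v_5,v_3)$ and points to the two unicycles $Q_1=(v_1,v_4)$ and $Q_2=(v_3,v_5)$ sitting inside it. No general reasoning is offered.

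Your minimum-length cycle argument is the natural way to establish the claim rigorously. The key step---that a shortest directed cycle $Q_1$ in $G_C$ can have no chord in the induced subgraph $G_{Q_1}$, because any chord $u_i\to u_j$ with $j\not\equiv i+1\pmod k$ followed by the arc of $Q_1$ from $u_j$ back to $u_i$ yields a strictly shorter directed cycle inside $G_C$---is sound; the length count $1+((i-j)\bmod k)<k$ holds in every cyclic position since $j\neq i$ and $j\not\equiv i+1$. Since all edges of such a chord and of $Q_1$ lie between vertices of $V(C)$, the shortcut cycle indeed lives in $G_C$, so minimality is contradicted and $G_{Q_1}$ is a unicycle. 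The subsequent peel-off iteration is fine (and more than the lemma literally requires, since ``one or more'' is already secured by $Q_1$). In short, your proposal is a strict improvement on what the paper presents for this lemma.
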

\begin{proof}
Consider Example \ref{exmp1}. The vertices $C=(v_1, v_4, v_5, v_3)$ form a directed cycle in $G$. However, there exists two vertex induced subgraphs of vertices in $C$ which are unicycles, viz., $Q_1=(v_1, v_4)$ and $Q_2=(v_3, v_5)$.
\end{proof}

Theorem \ref{USCS} states that a critical graph is always a union of strongly connected subgraphs and  removing the edges not lying on a directed cycle in a critical graph do not change the capacity region.

Theorem \ref{criticaledge} says that an edge is critical if it belongs to a unicycle. Thus, by keeping just the edges that form  unicycles in $G$, a critical graph could be constructed, followed by a critical fitting matrix.

In any fitting matrix of rank $r$ there are necessarily $n-r$ sets of minimally dependent rows (Proposition \ref{prop2}). Note that $n-r$ is maximum when $r=minrank$. 
Lemma \ref{lemma1} and Lemma \ref{lemma2} together imply that given a minimally dependent set of rows, there always exist a unicycle $Q$ in $G$ with at most one vertex (row) from $R_{W_i}; i\in [N]$. Further, if that minimally dependent set corresponds to a directed cycle $C$ in $G$, Lemma \ref{ch4lemma3} implies that in any fitting matrix where $S_Q$ is minimally dependent, $S_C$ is not. This justifies Step $2$ of the algorithm where each column consists of at most one demand from any $\mathcal{W}_i; i\in [N]$.


One unicycle is equivalent to one minimally dependent set of rows and hence reduces the rank by one. Step $3$ of the algorithm computes minrank of single unicast uniprior problems, using the method in \cite{KNR}, which prunes the corresponding side-information graph to obtain the maximum number of disjoint unicycles.

This justifies Step $4$ of the algorithm where summing the minranks (along each column of the table) is analogous to taking the count of unicycles after one round of pruning $G$. Note that, minimizing the ranks of every column for one $T_i$ constitute one round of pruning $G$.

Finally, minimizing the sum of minranks over all $T_i; i\in [\beta]$ gives the minrank  of the unicast-uniprior problem.  Minrank is obtained when we have pruned $G$ to get the maximum possible number of disjoint unicycles.

\subsection{ Identifying critical side-information from the algorithm}

Suppose that $S_i=minrank$. The critical side-information for the unicast uniprior problem can be characterized  using table $T_i$. Let $\mathscr{C}_1, \mathscr{C}_2, \ldots, \mathscr{C}_{W_{max}}$ denote the columns of $T_i$. 

\begin{enumerate}
\item Along every column $\mathscr{C}_i; i\in [W_{max}]$, there is a single unicast uniprior problem whose rank can be computed in polynomial time. Let the fitting matrix of minimum rank  for each of these $W_{max}$ problems be denoted by $\mathscr{A}_1, \mathscr{A}_2, \ldots, \mathscr{A}_{W_{max}}$.  
 
\item  Every minimally dependent set of rows in these matrices form unicycles in the side-information graph $G$. Construct $G'$ from $G$ by deleting all the edges other than those that constitute those unicycles. 
$G'$ represents a critical graph for the concerned unicast-uniprior problem and the side-information bits corresponding to the edges in $G'$ are critical.

\end{enumerate}

\section{Algorithm In Action}
\label{examples}
\begin{example}
	\label{ex2}
Consider the following unicast-uniprior index coding problem:
There are $N=5$ receivers and $n=10$ messages. The receiver demands and side-information are given in Table \ref{ex1tab}.

	\begin{table}[!htbp]
		
		\begin{center}
			{
				\begin{tabular}{|c|c|c|c|c|c|}
					\hline	
					
					$\mathcal{W}_i$&  $x_1, x_2 $ & $x_3, x_5$ & $x_4, x_6$& $x_9$ & $x_7, x_8, x_{10}$\\
					\hline 
					$\mathcal{A}_i$ & $x_3$& $x_1,x_4$ & $ x_5, x_9, x_8$& $x_{10}, x_7$& $x_2, x_6$ \\
					\hline
				\end{tabular}
			}
		\end{center}
		\small\caption{ Unicast-uniprior problem in Example \ref{ex2}}
		\label{ex1tab}
	\end{table} 

We illustrate the step-by-step execution of the algorithm and compute the minrank for this problem.
\begin{enumerate}  
\item { $W_{max}= \underset{i=[ 5 ]}{max}\; |\mathcal{W}_i|=3 $ .}
\item{ We split the given unicast-uniprior problem into $W_{max}=3$ single-unicast uniprior problems. There are $\beta$ ways of doing this, where $${\beta =\prod_{i=1}^{5}} {3 \choose { |\mathcal{W}_i|}} |\mathcal{W}_i|!=1944.$$ All of these $1944$ ways of splitting can be captured by filling a $5 \times 3$ table. First instance out of the $1944$ possible splitting is given below:
	\begin{table}[!htbp]
	
	\begin{center}
		\scriptsize{
			\begin{tabular}{|c|c|c|}
				
			  \hline	
				$I$& $II$ & $III$\\
					\hline
				 $x_1 $ & $x_2$ & \\
				\hline 
				$x_3$ & $x_5$& \\
				\hline
				$x_4$ & $x_6$ & \\
				\hline
				$x_9$ & & \\
				\hline 
				$x_7$ & $x_{10}$ & $x_8$\\
				\hline 
			\end{tabular}
		}
	\end{center}

\end{table}

 }
\item{ The three single-unicast uniprior problems that are obtained from the above table are as follows:
	
		\begin{table}[!htbp]
		
		\begin{center}
			\scriptsize{
				\begin{tabular}{|c|c|c|c|c|c|}
					
					\hline	
						\multicolumn{2} {|c|} {Problem $I$}& \multicolumn{2} {c|} {Problem $II$ }& \multicolumn{2} {c|} {Problem $III$}\\
					\hline
					$\mathcal{W}_1$ & $\mathcal{A}_1$ & $ \mathcal{W}_2$ & $ \mathcal{A}_2$ & $ \mathcal{W}_3$ & $ \mathcal{A}_3$\\
					\hline
					$x_1 $ & $x_3$& $x_2$ &$x_3$ &&\\
					\hline 
					$x_3$ & $x_1, x_4$&  $x_5$&$x_1, x_4$ & &\\
					\hline
					$x_4$ &$x_5, x_9, x_8$& $x_6$ & $x_5, x_9, x_8$&& \\
					\hline
					$x_9$ &$x_{10}, x_7$ & &&&\\
					\hline 
					$x_7$ & $x_2, x_6$& $x_{10}$ & $x_2, x_6$ & $x_8$&$x_2, x_6$\\
					\hline 
				\end{tabular}
			}
		\end{center}
		
	\end{table}

}

\item{We evaluate the minranks of the three problems using the method given in \cite{KNR}. $rk_{11}=4, rk_{12}=4, rk_{13}=1$. $$S_1 = rk_{11}+ rk_{12}+ rk_{13}=9$$ }

\item{ Similarly, we find $S_i$ values for all $z=1944$ ways of splitting the original problem. After that we get the minrank as follows:
$$minrank \triangleq \underset{i=[ 1944] }{min}{S_i}=7.$$
}
\end{enumerate}

\subsection{Analysis for reduction in complexity}
We know that calculating minrank by brute force requires evaluating the $\mathbb{F}_2$ rank of $2^{\sum\limits_{i=1}^{n} |\kappa_i|}$ matrices and finding the minimum among them. If Gaussian elimination is used to compute rank, this method has a computational complexity of  $2^{\sum\limits_{i=1}^{n}|\kappa_i|} O(n^2)$. 
In this example instead of evaluating the ranks of $2^{10}=1024$ matrices each of order $n=10$, we evaluate the ranks of just $1944$ matrices, each of order $N=5$ to arrive at minrank. Further the brute force technique employs gaussian elimination whose computational complexity is $O(n^2)$ per matrix. The proposed technique computes ranks of each of the $1944$ matrices in linear polynomial number of computations the complexity of which is $O(N)$.

Thus we see huge computational savings with the proposed method of finding minrank. As $n$ increases, the new method shall offer a greater computational advantage over the brute force technique.

\end{example}

 \begin{example}
 \label{ex3}
 Consider the problem described in Table \ref{ex2tab}.
 
 	\begin{table}[!htbp]
 	
 	\begin{center}
 		\scriptsize{
 			\begin{tabular}{|c|c|c|c|c|c|}
 				\hline	
 				
 				$\mathcal{W}_i$&  $x_1, x_6, x_{10} $ & $x_2, x_4, x_7, x_9$ & $x_3, x_5, x_8$\\
 				\hline 
 				$\mathcal{A}_i$ & $x_2,x_4,x_9$& $x_3, x_5, x_8, x_{10}$ & $ x_1, x_6, x_7$ \\
 				\hline
 			\end{tabular}
 		}
 	\end{center}
 	\small\caption{ Unicast-uniprior problem in Example \ref{ex3}.}
 	\label{ex2tab}
 \end{table} 
 The minrank for this problem is $7$.
 If we compute minrank by brute force, we need to compute the ranks of  $2^{(9+16+9)}=2^{34} \approx 1.717 \times 10^{10}$ matrices, each of order $10 \times 10$. Notice that the rank over $\mathbb{F}_2$ is found by gaussian elimination. When we employ our algorithm we compute the minranks of $\beta=13824$ single unicast uniprior problems. The ranks are found in number of computations that is linear in the number of messages. Simulation result  that computed minrank using our algorithm was found to agree with that obtained from brute force computations.
 
\end{example}

\begin{example}
	\label{ex4}

		Consider the problem described in Table \ref{ex3tab}.
		
		\begin{table}[!htbp]
			
			\begin{center}
				\scriptsize{
					\begin{tabular}{|c|c|c|c|c|c|}
						\hline	
						
						$\mathcal{W}_i$&  $x_1, x_2, x_3, x_{4} $ & $x_5, x_6, x_7$ & $x_8, x_9, x_{10}, x_{11}, x_{12}$\\
						\hline 
						$\mathcal{A}_i$ & $x_5, x_9, x_{10}, x_{11}, x_{12}$ & $x_1, x_2, x_8$ & $ x_3, x_4, x_6, x_7$ \\
						\hline
					\end{tabular}
				}
			\end{center}
			\small\caption{ Unicast-uniprior problem in Example \ref{ex4}.}
			\label{ex3tab}
		\end{table} 
		The minrank for this problem is $8$.
	Brute force computation of minrank needs computing the ranks of $2^{(20+9+20)}=2^{49} \approx 5.629 \times 10^{14}$ matrices (each of order $12 \times 12$) by Gaussian elimination. When our algorithm is employed, we compute the minranks of $\beta=864000$ single unicast uniprior problems. Simulation result that computed minrank using our algorithm was found to agree with that obtained from brute force computations.
		
\end{example}

\section{Conclusion}
\label{concl}
This work solves the unicast-uniprior index coding problem. Novel ideas like critical fitting matrix and side-information supergraph were employed to prove the results. A discussion on the properties of the fitting matrix  led to identifying the critical  side-information bits. Also,  a novel method to compute the minrank is provided. The proposed technique  greatly simplifies the existing brute force method of computing minrank.
\appendix

\subsection{Proof of Proposition \ref{prop1}}
	\label{prop1proof}
	Consider a  fitting matrix $A$. Suppose that $p>2$ rows of $A$ are identical. We prove by contradiction that this cannot be true.
	
	Two or more rows from a given $R_{W_i}$ have identical side-information and thus have \enquote{$x$}s in the same columns. However each row has a $1$  as its diagonal entry corresponding to the wanted message. 
	
	Suppose $p=3$ rows of $A$ are identical. For this to be true, each row must be taken from a different $R_{W_i}$. Let these be $R_{W_{\alpha_1}}, R_{W_{\alpha_2}}, R_{W_{\alpha_3}}$. As the problem is uniprior, $\mathcal{A}_{\alpha_m}\cap \mathcal{A}_{\alpha_n}=\phi$ whenever $\alpha_m \neq \alpha_n$. The columns having \enquote{$x$}s for  one row will always be different from the columns that have \enquote{$x$}s in another.  Due to the unicast-uniprior nature of the problem, the message demanded by one receiver cannot be present as side-information at more than one $R_{W_i}$s. So if we need $3$ identical rows, the \enquote{$x$}s in these rows must be $0$s. But then, we already have \enquote{$1$}s in the diagonal entries of all the $3$ rows corresponding to their wanted messages. This contradicts the assumption that $p=3$ rows of $A$ are identical.	
	This implies that for any fitting matrix $A$, $p>2$ rows can never be identical. 
	
\subsection{Proof of Corollary \ref{cor1}}

\label{proofcor1}

	Assume that the side-information graph has a clique with $p>2$ vertices. Consider the $p$ rows of $\mathcal{A}$ that correspond to these vertices. Let them be $\underbar r_{\alpha_1}, \underbar r_{\alpha_2}, \ldots, \underbar r_{\alpha_p}$. These rows will have $1$s in their diagonal entries corresponding to the wanted messages. In particular, we choose the \enquote{$x$}s in these rows such that $x=1$ in the $(\alpha_i, j)^{th}$ entry only if  the directed edge $e_{j,\alpha_i}$ is part of the clique; $x=0$ otherwise.  We know that in a clique one vertex has incoming edges from the remaining $p-1$ vertices. The entries corresponding to all other incoming edges are chosen to be $0$s. Thus we get a set of $p$ identical rows (they have $1$s in columns indexed by the $p$ vertices of the clique and $0$s elsewhere). This implies the existence of a fitting matrix with more than two identical rows, contradicting Proposition \ref{prop1}. Hence, our assumption that the side-information graph has a clique with $p>2$ vertices is wrong. Hence proved.
	
\subsection{Proof of Proposition \ref{prop2}}
\label{proofprop2}
	Given that $A'_{m \times n}$ is the matrix obtained after removing all pairs of identical rows from $A_{n \times n}$. When $A'$ is of rank $r'$, there are $r'$ linearly independent rows and each of the remaining $m-r'$ rows is a linear combination of these $r'$ rows.  Let the rows of the matrix $A'$ be re-arranged to get the matrix $B'$ such that the first $r'$ rows of $B_{m \times n}$ are linearly independent. Let the rows of $B_{m \times n}$ be represented by $\underbar r_1, \underbar r_2,\ldots, \underbar r_m$. Each of the rows $\underbar r_{r'+1}, \underbar r_{r'+2}, \ldots, \underbar r_{m} $ is some linear combination of the first $r'$ rows, $\underbar r_1, \underbar r_2, \ldots, \underbar r_{r'}$. Thus we have,
	$$B=\left[\begin{array}{lllllll}
	\underbar r_1&	    \underbar r_2&
	\cdots &
	\underbar r_r&
	\sum\limits_{i=1}^{r} c_{ji} \underbar r_i&
	\cdots&
	\sum\limits_{i=1}^{r} c_{ji} \underbar r_i
	\end{array}\right]^T$$
	where $c_{ji} \in \{0,1\}; i \in [ r' ]$.
	
	The row $\underbar r_j ; j\in \{r'+1, \ldots, m\}$ along with the set of linearly independent rows that add to give $\underbar r_j$ (the set of rows $\underbar r_j \cup \{\underbar r_i ;\; i \in [ r' ]$ where $c_{ji}=1$\}), form a minimally dependent set of rows. Clearly, we have $m-r'$ minimally dependent sets of rows in $A'$.
	For the matrix $A$, each pair of identical rows represents one minnimally dependent set. So, $A$ has at least $n-r$ sets of minimally dependent rows.


\subsection{Proof of Lemma \ref{lemma1}}
\label{prooflemma1}
	Consider a set of minimally dependent rows $R=\{\underbar{r}_1, \underbar{r}_2, \ldots, \underbar{r}_l\}$ taken from a fitting matrix $A$. Let them come from $\Omega \leq N$ different demand sets of the unicast-uniprior problem. Let these rows form an $l \times n$ matrix $L$. Each of these rows has $1$ in the column corresponding to the message demanded (diagonal entries of the fitting matrix). Also, each column shall have an even number of $1$s, because if not, the rows cease to be minimally dependent.

In  any row $ \underbar{r}_i$, in the $i^{th}$ column, apart from the $1$ in the $(i,i)^{th}$ position (denoting the message demanded), there is always an odd number of $1$s. This is because, if not, the $l$ rows cease to be minimally dependent. Also, because of the uniprior nature of the problem these odd number of $1$s in any given column occur in one and only one $R_{W_i}$ (Proposition \ref{prop3}).

In those columns whose indices do not correspond to the messages demanded in the $l$ rows, there is always an even number of $1$s and as the problem is uniprior, these $1$s occur in the rows derived from the same demand set (some $R_{W_i}$, out of the $\Omega$ different ones that constitute the minimally dependent set of rows). We choose to ignore these columns and focus on the remaining columns. Thus, we form a subgraph $\mathcal{G}'$ of the side-information supergraph $\mathcal{G}$, keeping only those $l$ nodes corresponding to the messages demanded in the $l$ rows.

In $\mathcal{G}'$, locate the vertex that corresponds to the message demanded in the first row $\underbar{r}_1$, let it be $A_1$. WLOG, assume that it is part of the supernode $S_1$. There will be an odd number of $1$s in the same column as the $1$ corresponding to the message demanded in $\underbar{r}_1$. This means there is an incoming edge from a supernode $S_j$, $j \in [ N ], j\neq 1$, to the node $A_1$. Pick one node (say $A_2$) in $S_j$. Let the row coresponding to $A_2$ be denoted by $\underbar{r}_2$.  Since $R$ is minimally dependent, there will be an odd number of $1s$ in the same column as the $1$ corresponding to the message demanded in $\underbar{r}_2$. This means, there is some supernode $S_i$ that has an outgoing edge to node $A_2$. This $S_i$ could be one among the $\Omega-1$ supernodes in $\mathcal{G}'$ other than $S_2$.

If it is $S_1$, we have found the subset of rows as $\underbar{r}_1, \underbar{r}_2$ with $x=1$ in $\underbar{r}_1$ in  the column corresponding to the message demanded in $\underbar{r}_2$ and $x=1$ in $\underbar{r}_2$ in the column corresponding to the message demanded in $\underbar{r}_1$.

If not $S_1$, it could be some other supernode $S_j$, $j \neq 1,2$. Again, we can pick one node say $A_3$ in $S_j$ and find that supernode $S_i$ which has an outgoing edge to $A_3$. But there are only a finite number of supernodes ($S$), and we know that the number of minimally dependent rows (or the number of vertices in $\mathcal{G}$) $l>\Omega$  as we have more than one row from at least one $R_{W_i}$. Hence, as we proceed, there will be an incoming edge from some supernode $S_i$ which contains a node that was previously traversed, say $A_i$, to the node in hand, say $A_f$. 

When this happens, we can start with the supernode $S_i$ of the final node to which there was an incoming edge ($A_f$), and traverse in the reverse direction along the path previously traversed, getting back to  $A_f$, forming a cycle. Now, pick the rows corresponding to the vertices in this cycle. For each row in this cycle, if we put $x=1$ in the column corresponding to the message demanded by the node where an edge ends, we get a set of minimally dependent rows with at most one from any $R_{W_i}$. 

\begin{example}
	\label{exlemma}
	This example simplifies the understanding of Lemma \ref{lemma1} and its proof.
		Consider the  unicast-uniprior problem as given in Table \ref{table2}. Clearly, $N=5, n=15$.

		\begin{table}[!htbp]
		
		\begin{center}
			{
				\begin{tabular}{|c|c|c|c|c|c|}
					\hline	
					
					$\mathcal{W}_i$&  $x_1, x_2 $ & $x_5, x_6,$ & $x_8, x_9,$ & $x_{11}, x_{12}, $ & $x_{15}$\\
					
					&$ x_3, x_4$& $x_7$ & $x_{10}$&$x_{13}, x_{14}$&\\
					\hline 
					$\mathcal{K}_i$ & $x_5, x_8, $& $x_1,x_2, $ & $ x_3,x_6$ & $x_4, x_7$ & $x_9, x_{10}, x_{13}$ \\
					&$x_{11}, x_{15}$&$x_{12}, x_{14}$&&&\\
					\hline
				\end{tabular}
			}
		\end{center}
		\small\caption{Unicast-uniprior problem; Example \ref{exlemma}.}
		\label{table2}
	\end{table}  
		
		A set $L$ of minimally dependent rows is given below  with $c_1=4, c_2=2, c_3=1, c_4=2, c_5=1$.  Thus, $l=10, \Omega=5$. The matrix $\mathbb{L}$ represents the rows $\{\underbar{r}_1, \underbar{r}_2, \underbar{r}_3, \underbar{r}_4, \underbar{r}_5, \underbar{r}_7, \underbar{r}_8, \underbar{r}_{11}, \underbar{r}_{13}, \underbar{r}_{15}\}$ from an arbitrary  fitting matrix.	
	   \[\scriptsize \mathbb{L}=
	   \begin{array}{l}
	   \;\begin{array}{rrlllllrrl}
	    1 & 2 & 3 & 4 & 5 & 6 & 7 & 8 & 9 & 10 \;\;\; 11 \;\;\; 12 \;\;\; 13 \;\;\;14 \;\;\; 15\\
	    \hdashline\\
	   \end{array}\;\\
	   \left[\begin{array}{llllllllllllllllll}
	    1 & 0 & 0 & 0 & 1 & 0 & 0 & 0 & 0 & 0 & 0 & 0 & 0 & 0 & 0 \\
		0 & 1 & 0 & 0 & 1 & 0 & 0 & 1 & 0 & 0 & 0 & 0 & 0 & 0 & 0\\	
		0 & 0 & 1 & 0 & 1 & 0 & 0 & 0 & 0 & 0 & 0 & 0 & 0 & 0 & 1 \\

		0 & 0 & 0 & 1 & 0 & 0 & 0 & 0 & 0 & 0 & 1 & 0 & 0 & 0 & 0 \\
			\hdashline
		1 & 1 & 0 & 0 & 1 & 0 & 0 & 0 & 0 & 0 & 0 & 0 & 0 & 0 & 0 \\
		0 & 0 & 0 & 0 & 0 & 0 & 1 & 0 & 0 & 0 & 0 & 0 & 0 & 0 & 0 \\
			\hdashline
		0 & 0 & 1 & 0 & 0 & 0 & 0 & 1 & 0 & 0 & 0 & 0 & 0 & 0 & 0 \\
		
			\hdashline
		0 & 0 & 0 & 0 & 0 & 0 & 1 & 0 & 0 & 0 & 1 & 0 & 0 & 0 & 0\\
		0 & 0 & 0 & 1 & 0 & 0 & 0 & 0 & 0 & 0 & 0 & 0 & 1 & 0 & 0 \\
			\hdashline
		0 & 0 & 0 & 0 & 0 & 0 & 0 & 0 & 0 & 0 & 0 & 0 & 1 & 0 & 1\\  
		\end{array}\right]\\
		\end{array}\]

\begin{figure}[htbp]
	\centering
	\includegraphics[scale=0.4]{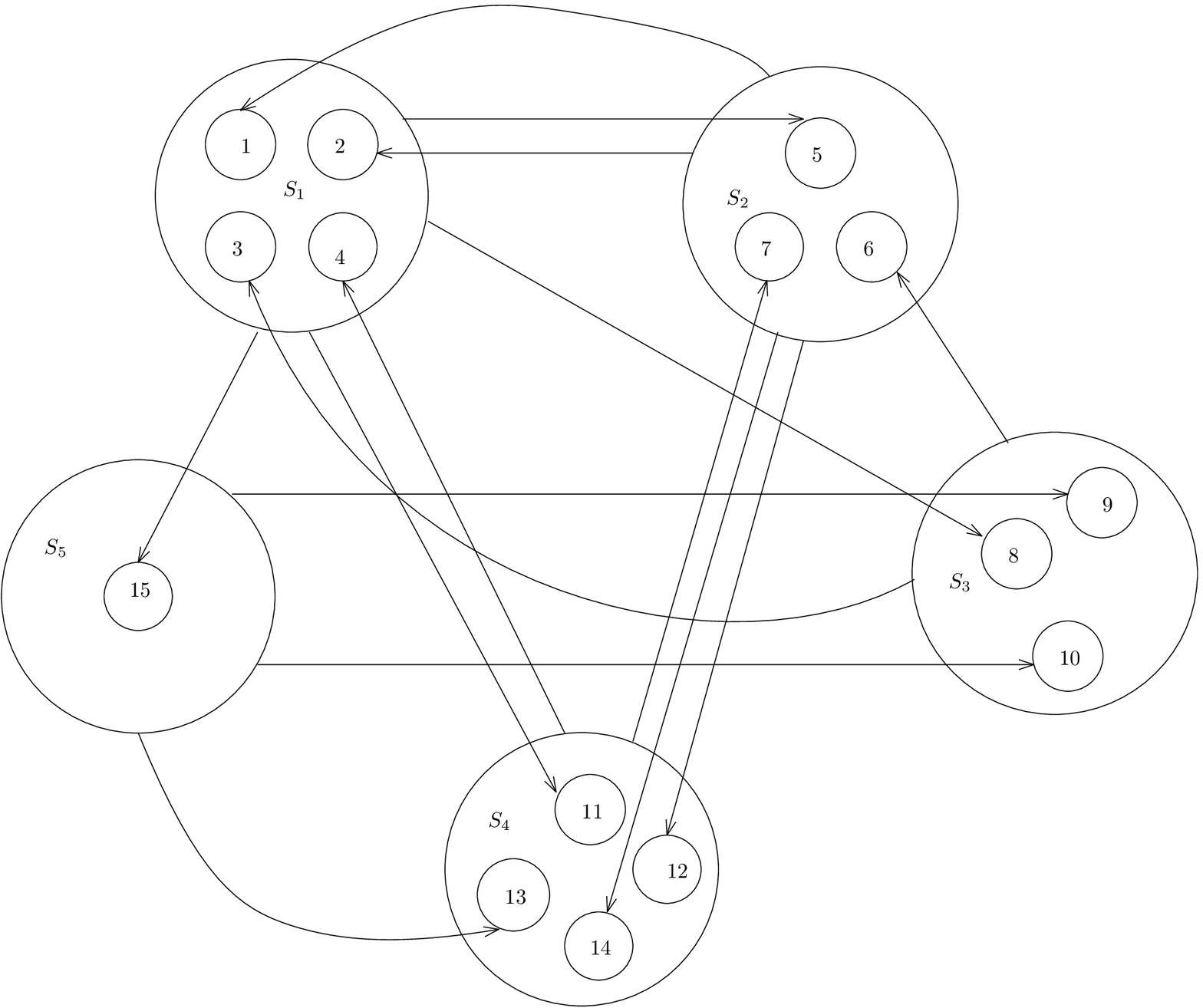}
	\caption{$\mathcal{G}$ for Example \ref{exlemma}.}
	\label{supgr2}
\end{figure}

 $\mathcal{G}'$ obtained from the side-information supergraph	$\mathcal{G}$ (Fig. \ref{supgr2}) by keeping just the $10$ vertices  corresponding to the messages demanded, and their supernodes is shown in Fig. \ref{fig2ex2}.

		\begin{figure}[htbp]
		\centering
		\includegraphics[scale=0.4]{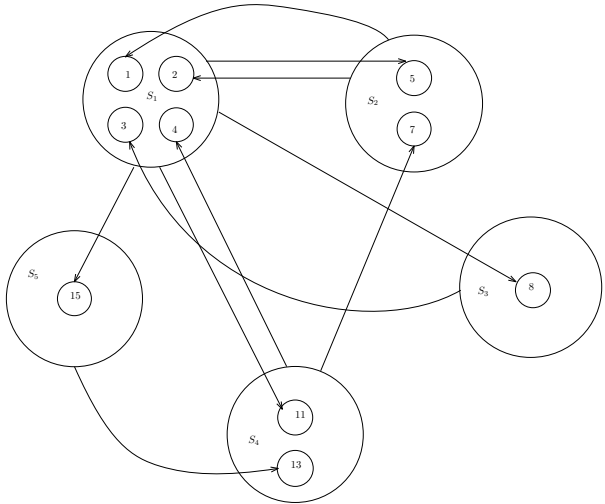}
		\caption{$\mathcal{G}'$ for Example \ref{exlemma}.}
		\label{fig2ex2}
	\end{figure}
%
%
	
	
%
	
	WLOG we start from  $v_1$. There is an incoming edge from the supernode $S_2$ to $v_1$. Now pick any one vertex from $S_2$, say $v_5$. There is an incoming edge from $S_1$ to $v_5$. Hence we get a subset of two minimally dependent rows, with at most one from any $R_{W_i}$, as follows:
	$$\begin{array}{llllllllllllllllll}
	\underbar{r}_1 & 1 & 0 & 0 & 0 & 1 & 0 & 0 & 0 & 0 & 0 & 0 & 0 & 0 & 0 & 0 \\
	\hdashline
	\underbar{r}_5 &	1 & 0 & 0 & 0 & 1 & 0 & 0 & 0 & 0 & 0 & 0 & 0 & 0 & 0 & 0 \\

	\end{array}$$

	Alternately, start from $v_1$. There is an incoming edge to $v_1$ from $S_2$. Pick one vertex from $S_2$, say $v_7$. There is an incoming edge to $v_7$ from $S_4$. Pick one vertex from $S_4$, say $v_{11}$. There is an incoming edge to $v_{11}$ from $S_1$. Note that $S_1$ is the supernode of the vertex we started with, i.e., $v_1$. Hence we get a subset of three minimally dependent rows, with at most one from any $R_{W_i}$, as follows: 
		$$   \begin{array}{lllllllllllllllllll}
	\underbar{r}_1 &1 & 0 & 0 & 0 & 0 & 0 & 0 & 0 & 0 & 0 & 1 & 0 & 0 & 0 & 0 \\
	
	\hdashline

	\underbar{r}_7 & 1 & 0 & 0 & 0 & 0 & 0 & 1 & 0 & 0 & 0 & 0 & 0 & 0 & 0 & 0\\
	\hdashline
	\underbar{r}_{11}& 0 & 0 & 0 & 0 & 0 & 0 & 1 & 0 & 0 & 0 & 1 & 0 & 0 & 0 & 0\\  
	
	\end{array}.$$
	
\end{example}

\subsection{Proof of Lemma \ref{lemma2}}
\label{prooflemma2}
 Using the proof of Lemma \ref{lemma1} in Appendix \ref{prooflemma1}, we know that the vertices corresponding to the set of rows, $\mathbb{S}$ form a cycle. The set $\mathbb{S}$ consists of at most one row from any $R_{W_i}$. Using Proposition \ref{prop3}, in the vertex induced subgraph $Q$ each vertex can have at most one outgoing edge. Hence, $Q$ is a unicycle (see Definition \ref{unicycledef}).
      
 \balance
	\bibliographystyle{IEEEtran}
	
\end{document}